\definecolor{darkblue}{HTML}{00008B} 
\definecolor{keywordcolor}{rgb}{0.7, 0.1, 0.1}   
\definecolor{commentcolor}{rgb}{0.4, 0.4, 0.4}   
\definecolor{symbolcolor}{rgb}{0.0, 0.1, 0.6}    
\definecolor{sortcolor}{rgb}{0.1, 0.5, 0.1}      
\newcommand{\code}{\lstinline}
\newcommand{\abs}[1]{\left\lvert{#1}\right\rvert}
\newcommand{\mathlib}{{\fontfamily{lmss}\selectfont mathlib}\xspace}
\newcommand{\R}{\mathbb{R}}
\newcommand{\Q}{\mathbb{Q}}
\DeclareMathOperator{\id}{id}
\DeclareMathOperator{\Aut}{Aut}
\newtheorem{theorem}{Theorem}
\newtheorem{proposition}[theorem]{Proposition}
\newtheorem{lemma}[theorem]{Lemma}
\theoremstyle{definition}
\newtheorem{definition}[theorem]{Definition}
\title{Formalizing Galois Theory}
\author[1]{Thomas Browning}
\author[1]{Patrick Lutz}
\affil[1]{University of California, Berkeley}
\begin{document}

\maketitle

\begin{abstract}
We describe a project to formalize Galois theory using the Lean theorem prover, which is part of a larger effort to formalize all of the standard undergraduate mathematics curriculum in Lean. We discuss some of the challenges we faced and the decisions we made in the course of this project. The main theorems we formalized are the primitive element theorem, the fundamental theorem of Galois theory, and the equivalence of several characterizations of finite degree Galois extensions.
\end{abstract}

\section{Introduction}


The Lean theorem prover (a proof assistant developed by Leonardo de Moura and collaborators at Microsoft research \cite{demoura2015lean}) was first released in 2013. But the past couple of years have witnessed an explosion of work on formalizing math in Lean. This explosion has included the formalization of some surprisingly advanced topics in mathematics, including a proof of the independence of the continuum hypothesis \cite{han2019formalization}, the theory of Witt vectors \cite{commelin2020formalizing}, and the definition of perfectoid spaces \cite{buzzard2020formalizing}.

These accomplishments have been supported by the formalization of a large body of more prosaic definitions and results, which the advanced topics depend on. Much of this work has been done as part of the development of \mathlib, a large open-source library of math formalized in Lean \cite{mathlib2020lean}. In fact, the development of \mathlib and the community around it is one of the main factors that led to the recent explosion of math formalized in Lean. And \mathlib has continued to grow rapidly. In the past year, more than $250,000$ lines of code have been added, including proofs of the Cayley-Hamilton theorem, Carath\'{e}odory's theorem, and the fundamental theorem of calculus. It now contains a substantial percentage of the standard undergraduate math curriculum. 

We will report on a project to formalize Galois theory in Lean, which was undertaken as a contribution to \mathlib, filling one of the remaining holes in its coverage of undergraduate math. This paper can be seen as a case study in what it's currently like to formalize a significant piece of undergraduate-level mathematics using the Lean and \mathlib. We will give an overview of how we formalized the proofs of some of the theorems of Galois theory and comment on the obstacles that we encountered, the design choices we made, and the insights into the original mathematics that were revealed by formalization.

\subsection{History of This Project}

The project to formalize Galois theory in Lean was started by a group of undergraduate students at Imperial College London. At that time, the algebra library of Lean was less developed and so much of their work went into building up basic definitions and facts about rings, fields, and algebras. Kenny Lau, in particular, formalized many essential definitions and proved several key results needed to formalize Galois theory, including facts about minimal polynomials, degrees of field extensions, and fixed fields of groups acting on fields. He also went on to formalize several other fundamental theorems in field theory, such as the existence of splitting fields and algebraic closures.

In the summer of 2020, we organized a seminar at Berkeley to learn how to use the Lean theorem prover. We began working on formalizing Galois theory as an extension of that seminar, building on the work by the group at Imperial to finish the proof of the fundamental theorem of Galois theory. In doing so, we benefited from the extensive work that has gone into \mathlib's algebra library, including many contributions by Anne Baanen, a few of which we will mention below. In turn, the Galois theory that we formalized is now part of \mathlib and will continue to be added to and modified as part of \mathlib's ongoing development. In fact, Baanen and others have already made use of some of our work in their formalization of Dedekind domains and ideal class groups \cite{baanen2021formalization}.

The history of this project demonstrates an important point about how the \mathlib community functions and how it has been able to grow so rapidly. Namely, the organization and open-source nature of \mathlib enables collaboration from hundreds of people around the world, including both professionals working on formalization full-time, as well as undergraduates (like the Imperial group) and graduate students (like us) who have picked up Lean out of curiosity. Our project would have taken many times longer to complete if we had not been able to build off the work of Lau, Baanen, and many others. This feature of \mathlib development is also discussed in a paper by Baanen, Dahmen, Narayanan, and Nuccio on formalizing algebraic number theory in Lean \cite{baanen2021formalization}.

\subsection{Other Formalizations of Field Theory and Galois Theory}

This is not the first time that Galois theory, or field theory more generally, has been formalized. We will mention a few notable examples here, though there may be others we are unaware of.

A number of results in field theory and Galois theory have been formalized in Coq as part of the Mathematical Components library. Some of these results, including all of the results that we discuss in this paper, as well as several that have not yet been formalized in Lean, were first formalized more than ten years ago as part of the project to formalize the Feit--Thompson odd order theorem \cite{gonthier2013machine}. More recently, Sophie Bernard, Cyril Cohen, Assia Mahboubi, and Pierre-Yves Strub completed a project to formalize a proof of the Abel-Ruffini theorem on the insolvability of the quintic \cite{bernard2021unsolvability}. At the end of section \ref{section:algebra}, we will comment on a couple specific implementation differences between the development of Galois theory in Coq and in Lean.

The only other attempt to formalize Galois theory that we are aware of is the GALOIS project at Manchester University in the 1990s to formalize Galois theory in the LEGO proof assistant \cite{aczel1994notestowards}. However, this project was never completed and to the best of our knowledge there was never a significant amount of Galois theory formalized in LEGO.

More generally, results in field theory have been formalized in several other proof assistants, including Isabelle, Mizar, and HOL Light. The development of field theory in Isabelle is especially advanced---for example, the existence of algebraic closures of fields was recently proved in Isabelle by Paulo Em\'{i}lio de Vilhena and Lawrence Paulson \cite{devilhena2020algebraically}. Also, the impossibility of trisecting an angle with a ruler and compass has been proved in both Isabelle and HOL Light (though notably the proof in HOL Light avoids talking about field extensions) and a number of results about fields have been formalized in Mizar, including the construction of a field extension adjoining a root of a given polynomial \cite{schwarzweller2020ring}.


\subsection{What Did We Prove?}

We formalized three main results in Galois theory.

\begin{theorem}[Primitive Element Theorem]
Let $E/F$ be a separable field extension of finite degree. Then $E = F(\alpha)$ for some $\alpha\in E$.
\end{theorem}


\begin{theorem}[Fundamental Theorem of Galois Theory]
If $E/F$ is a Galois extension of finite degree then there is an inclusion-reversing bijection between the intermediate fields of $E/F$ and the subgroups of the automorphism group $\Aut(E/F)$.
\end{theorem}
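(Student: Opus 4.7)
The plan is to construct explicitly two order-reversing maps between the poset of intermediate fields of $E/F$ and the poset of subgroups of $\Aut(E/F)$, and then show they are mutually inverse. An intermediate field $K$ is sent to the subgroup $\Aut(E/K)$ of automorphisms fixing $K$ pointwise, and a subgroup $H$ is sent to the fixed field $E^H := \{x \in E : \sigma(x) = x \text{ for all } \sigma \in H\}$. Both maps are manifestly inclusion-reversing, so the real content is verifying the two round-trip identities $E^{\Aut(E/K)} = K$ and $\Aut(E/E^H) = H$.

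For the first identity, the key observation is that whenever $E/F$ is a finite Galois extension and $K$ is an intermediate field, the extension $E/K$ is again Galois. Separability descends because the minimal polynomial of any $\alpha \in E$ over $K$ divides its minimal polynomial over $F$, which has distinct roots; normality descends because any irreducible $p \in K[x]$ with a root $\alpha \in E$ divides the $F$-minimal polynomial of $\alpha$, which splits in $E$. Once $E/K$ is Galois, $E^{\Aut(E/K)} = K$ is essentially one of the equivalent characterizations of being Galois, which we take as available from the other main theorem mentioned in the excerpt.

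The second identity is Artin's lemma: for any finite group $H$ of automorphisms of a field $E$ one has $[E : E^H] \leq |H|$. The standard proof picks $|H|+1$ elements of $E$ supposed to be $E^H$-linearly independent, uses the action of $H$ to set up a homogeneous linear system with more unknowns than equations, chooses a nontrivial solution with the fewest nonzero entries, and derives a contradiction by applying a suitable $\tau \in H$ to the coefficients and subtracting to obtain a strictly shorter solution. Combined with $H \subseteq \Aut(E/E^H)$ and the universal bound $|\Aut(E/L)| \leq [E:L]$ for finite extensions, this gives $|H| \leq |\Aut(E/E^H)| \leq [E:E^H] \leq |H|$, forcing equality throughout.

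I expect the main obstacle in the formalization to be bookkeeping around types and coercions: intermediate fields, subgroups of the automorphism group, and the fixed-field and fixing-group constructions each live in distinct type-theoretic universes, so propagating lattice-theoretic compatibility through them requires substantial scaffolding and a careful choice of which objects carry which algebraic structure. The mathematical content of Artin's lemma itself is short on paper, but the descent on the number of nonzero coordinates demands some dexterity with \mathlib's linear algebra machinery, and is likely where the most care will be needed.
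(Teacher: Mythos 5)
Your second identity ($\Aut(E/E^H) = H$, forced by Artin's lemma, the universal bound $\abs{\Aut(E/L)} \leq [E:L]$, and the inclusion $H \subseteq \Aut(E/E^H)$) is exactly the paper's proof of that half of the correspondence. The first identity is where you diverge. The paper also starts from your observation that $E/K$ is again Galois, but it does not then appeal to the equivalent-characterizations theorem; instead it applies the primitive element theorem to $E/K$: writing $E = K(\alpha)$ with minimal polynomial $m$, separability and normality give that $m$ has exactly $\deg m$ roots in $E$, whence $\abs{\Aut(E/K)} = [E:K]$. Combining this with the degree equality $[E:E^H] = \abs{H}$ from the other half (taking $H = \Aut(E/K)$) yields $[E:E^{\Aut(E/K)}] = [E:K]$, and the inclusion $K \subseteq E^{\Aut(E/K)}$ then forces equality of fields. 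This is precisely the paper's self-described ``slightly unconventional route'': the fundamental theorem is deduced from the primitive element theorem rather than from the equivalence of characterizations.

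Your delegation to the equivalence theorem needs care, because in the paper's development that theorem is proved \emph{after} the correspondence and its proof uses it: the step $(3)\implies(2^\prime)$ there rests on the equality $[E:E^H]=\abs{H}$, i.e., on part 1 of the correspondence. Quoting the equivalence wholesale inside a proof of the correspondence is therefore circular as the material is organized. The circularity is repairable: the direction you need, Galois implies that the fixed field of the full automorphism group is the base field, follows from the chain $(1)\implies(4)\implies(3)\implies(2^\prime)$, and the only correspondence input in that chain is part 1, which you prove independently via Artin. But that chain requires the homomorphism-counting induction behind $(4)\implies(3)$, which the paper singles out as its trickiest lemma; alternatively you could prove the needed direction directly via an embedding-extension argument, which is machinery the paper never develops. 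Either way, your route, made rigorous, is substantially heavier than the paper's PET-based counting, so you should either make the non-circular dependency explicit or switch to the paper's argument for this half.
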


\begin{theorem}[Equivalent Characterizations of Galois Extensions]
Let $E/F$ be a field extension of finite degree. The following are equivalent:
\begin{enumerate}
    \item $E/F$ is Galois, i.e.\ separable and normal.
    \item $F$ is the fixed field of $\Aut(E/F)$.
    \item $\abs{\Aut(E/F)} = [E : F]$.
    \item $E$ is the splitting field of a separable polynomial.
\end{enumerate}
\end{theorem}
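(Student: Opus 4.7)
The plan is to prove the four statements equivalent via the cycle $(1) \Rightarrow (4) \Rightarrow (3) \Rightarrow (2) \Rightarrow (1)$, which lets us invoke the primitive element theorem for the first implication and Artin's theorem on fixed fields for the last two.

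For $(1) \Rightarrow (4)$, I would apply the primitive element theorem to write $E = F(\alpha)$, and then take $p(x) \in F[x]$ to be the minimal polynomial of $\alpha$. Separability of $E/F$ gives that $p$ has no repeated roots, while normality ensures that $p$ splits in $E$; since $E$ is already generated by $\alpha$, it is automatically the splitting field of $p$. For $(4) \Rightarrow (3)$, I would use the standard induction on $[E:F]$: an embedding of a subfield of $E$ into an algebraic closure extends in exactly $\deg(q)$ ways across an adjunction by a root of an irreducible separable $q$, and multiplying through the tower yields $\abs{\Aut(E/F)} = [E:F]$.

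For the remaining two implications, the workhorse is Artin's theorem: if $G$ is a finite subgroup of $\Aut(E)$, then $[E : E^G] = \abs{G}$ and $E/E^G$ is Galois. Writing $F' = E^{\Aut(E/F)}$, we always have $F \subseteq F'$. For $(3) \Rightarrow (2)$, Artin gives $[E:F'] = \abs{\Aut(E/F)} = [E:F]$, which forces $F = F'$. For $(2) \Rightarrow (1)$, the hypothesis $F = F'$ combined with Artin immediately yields that $E/F$ is Galois.

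The hard part will be Artin's theorem itself, and specifically the inequality $[E : E^G] \leq \abs{G}$. The standard proof is a linear-algebra argument using Dedekind's independence of characters, and in a formalization this is likely to be the longest and most subtle component. Happily, this fact about fixed fields should already be available in \mathlib (plausibly from Lau's earlier contributions on the topic), so the remaining effort is largely to stitch the four implications together cleanly while managing the coercions between $F$, intermediate fields of $E/F$, and $E$ that tend to cause friction in Lean's dependent type theory.
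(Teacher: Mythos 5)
Your proof is correct and its skeleton coincides with the paper's: the same cycle $(1)\Rightarrow(4)\Rightarrow(3)\Rightarrow(2)\Rightarrow(1)$, with the primitive element theorem powering $(1)\Rightarrow(4)$ and Artin's fixed-field theorem---which, as you guess, was already in \mathlib thanks to Kenny Lau (linear independence of characters giving $[E:E^H]\le\abs{H}$, and the fact that $E/E^G$ is separable and normal)---powering $(3)\Rightarrow(2)$ and $(2)\Rightarrow(1)$. The one genuine divergence is $(4)\Rightarrow(3)$: you propose induction on $[E:F]$, extending embeddings into an algebraic closure root by root, whereas the paper deliberately avoids degree induction because in Lean the inductive hypothesis would have to quantify over all field extensions of smaller degree, and there is no type of all field extensions (the same universe obstruction that shaped their proof of the primitive element theorem). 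Instead, the paper runs its induction scheme for intermediate fields (Lemma \ref{lem:induction1}) over the set of roots of the separable polynomial, with the predicate that the number of $F$-algebra homomorphisms $K\to E$ equals $[K:F]$, and counts extensions directly into $E$ (no algebraic closure is needed, since $E$ is a splitting field) via a bijection with dependent pairs (Proposition \ref{prop:alg_hom_equiv_sigma}) combined with Lemmas \ref{lem:adjoin_simple_degree} and \ref{lem:adjoin_simple_hom}. Your version is the textbook argument and is close to what the Coq mathcomp development does, where the fixed ambient field makes degree induction cheap; the paper's version trades familiarity for smooth formalizability in Lean without universe restrictions. One concrete wrinkle your plan should absorb in $(2)\Rightarrow(1)$: Artin's theorem produces Galois-ness of $E$ over the bottom intermediate field $\bot$, which in Lean has a different type from $F$; the paper bridges this gap not by transporting along an isomorphism but by applying the tower lemma (Lemma \ref{lem:gal_tower}) to $E/F/\bot$ via \code{is_scalar_tower}---precisely the coercion friction you anticipate at the end.
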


As we mentioned earlier, our formalizations of these results are now part of \mathlib. This means they will continue to be maintained and built upon rather than gradually becoming incompatible with other developments in Lean, and also that their statements can be viewed in the online \mathlib documentation (the relevant pages in the documentation are \href{https://leanprover-community.github.io/mathlib_docs/field_theory/galois.html}{\texttt{galois.lean}} and \href{https://leanprover-community.github.io/mathlib_docs/field_theory/primitive_element.html}{\texttt{primitive\_element.lean}}). On the other hand, it also means that eventually the contents of this paper may not exactly match the formalized statements online. Therefore we have chosen a specific commit of the \mathlib repository on Github to use as a reference for the formalized versions of the statements we discuss in this paper. The commit we have chosen is from January 1st, 2021 and can be found at this url: \burl{https://github.com/leanprover-community/mathlib/tree/9535f9148b91e33d12281ef5ac99292c2fb62220/src/field_theory}. This commit provides a snapshot of \mathlib at one point in time and is guaranteed to be compatible with the development of Galois theory as we present it in this paper, no matter what changes occur in later versions of \mathlib. Since it is hosted by Github rather than any individual, it is also likely to remain available online for many years.



\subsection{What Did We Learn?}

The process of formalizing a piece of mathematics essentially involves a translation between two mathematical languages. The first is the language in which people usually do mathematics, which includes centuries' worth of conventions and heuristics and draws on the powerful inference abilities of the human brain. The second is a language in which a computer can understand mathematics---in our case, Lean. We will sometimes refer to the first language as ``human mathematics'' in contrast to ``formalized mathematics.''

Many things are easier in human mathematics than in formalized mathematics and a few things are harder. Most things are at least a bit different. In this paper we will try to explain the differences between human mathematics and formalized mathematics that we encountered in the process of translation. Our intent is for our explanation to be understandable by a mathematician with no experience in formalization.

Some of the differences that we encountered are inherent to the process of formalization no matter what proof assistant is being used for formalization and some are caused by features of Lean that are not shared in all proof assistants. The most important of these features is that Lean is built on type theory rather than set theory (and in particular, a special kind of type theory called ``dependent type theory''). Lean shares this feature with many other proof assistants, including Coq (which also uses dependent type theory) and Isabelle/HOL (which uses simple type theory rather than dependent type theory). The main thing to know about type theory is that every object is of exactly one type. This means that when introducing new objects, we have to explicitly decide what type they are.

In human mathematics, things are different: it is normal to view an object from multiple perspectives and it is so easy that we often don't realize we are doing it. In type theory, it is also possible to view an object in multiple ways, but to do so we need to define a map, called a \emph{coercion}, between the two types. For example, if we want to think of a subgroup of a group $G$ as being a subset of $G$ then we need to explicitly define the map from subgroups of $G$ to subsets of $G$ and then insert this map everywhere that we want to think of a subgroup as a subset. Dealing with coercions may sound arduous, but fortunately Lean has automation to help with it; we still need to explicitly define the map, but in many cases Lean can automatically infer where it should be inserted. As we will see later on, being forced to explicitly think about these maps can have beneficial side-effects.

Formalized math also forces us to think more explicitly about which algebraic structures can be put on a mathematical object and how the relationships between those algebraic structures should be expressed. We will consider the real numbers as a motivating example.

In Lean, we can express the fact that the real numbers are a field as follows. The collection of real numbers is a type (the elements of that type are exactly the real numbers) and for every type $X$, there is a type of all field structures on $X$. Describing a field structure on the real numbers consists of constructing an element of the type of field structures on the type of real numbers.

Note that the type of field structures on the type of real numbers may have many different elements---this corresponds to the fact that we can put many different field structures on the real numbers. What we mean in human mathematics when we say that the set of real numbers ``is'' a field is really that one of these choices is canonical and should be used in almost all cases. To accommodate this type of situation, Lean provides a way to say that one specific element of the type of field structures on a type $X$ should be automatically used in any situation where we need a field structure on $X$.

In human mathematics, the fact that the real numbers are a field implies other facts about the real numbers. For example, that they are a division ring. Lean also has automation to help deal with this. Essentially, we can define a function from field structures on a type $X$ to division ring structures on $X$ and Lean can then automatically apply the function whenever it is needed. This ability is known as \emph{type class inference}.

Something analogous to type class inference often happens in human mathematics as well, but it can often be done on the fly and without needing to be spelled out explicitly anywhere. For example, if you are told that $F$ is an intermediate field of $\R/\Q$, then it is immediately clear that $\R$ is a vector space over $F$. But in Lean, we need to at some point explicitly show that this inference is possible (or at least we need to provide Lean with other inferences which imply this one---Lean is quite good at chaining these sorts of inferences together).

Even though type class inference is automatic in Lean, we still have to be explicit about exactly what sorts of algebraic structure we need to have in the statements of lemmas as well as within proofs. Returning to the scenario of the previous paragraph, if we need to know in the middle of a proof that $\R$ is a vector space over $F$ then we should make sure that we have already supplied Lean with the necessary facts to infer this.

All of this means that it is important to set up a hierarchy of structures and inferences so that Lean's automation is able to infer as much as possible without too much manual intervention. Sometimes this requires creating new structures that don't have any exact analogue in human mathematics. One structure in Lean that turned out to be particularly useful in our project is something called \code{is_scalar_tower}, which essentially expresses the relationship between a tower of field extensions $E/L/F$, but in a bit more generality. We will discuss this structure in more detail later in the paper.

Here's one general takeaway from all of this. Working in formalized mathematics and especially in type theory can sometimes be annoying because you are forced to pick a type for everything and explicitly construct coercions and instances of structures that might be so obvious that they don't need to be mentioned in human mathematics. On the other hand, this can actually be positive because it encourages us to think in more structural terms and notice abstract patterns that are repeated in different situations. Sometimes this even leads to simpler and more efficient proofs. For example, after defining what it means to adjoin an element to a field, we were able to use the theory of Galois insertions (which had already been developed in \mathlib) to immediately infer that the intermediate fields of a field extension form a complete lattice. Later in this paper, we will explain this example in more detail and also explain why the formalization process made the connection to Galois insertions clearer.

\section{Mathematical Overview}

Before we go further, we will review some concepts from Galois theory. This section is entirely in the language of human mathematics and does not contain any comments on the formalization process.

A polynomial $p$ over a field $F$ is said to be \textit{separable} if it is coprime with its formal derivative. This ensures that it has no repeated roots in any field extension of $F$.
If $E/F$ is an algebraic field extension then each element of $E$ has a minimal polynomial over $F$.
A field extension $E/F$ is \textit{separable} if the minimal polynomial of each element of $E$ is separable. A field extension $E/F$ is \textit{normal} if the minimal polynomial of each element of $E$ splits into linear factors over $E$.
A field extension $E/F$ is \textit{Galois} if it is both separable and normal.
The group $\Aut(E/F)$ is the group of automorphisms of $E$ which fix $F$ pointwise.
When $E/F$ is a Galois extension, this group is called the Galois group of $E$ over $F$.

The intuitive idea behind this definition of a Galois extension is that all the automorphisms of $E$ which fix $F$ pointwise come from permuting the roots of polynomials over $F$, and the condition that $E/F$ is a Galois extension ensures that we are not missing any roots, and hence that we are also not missing any automorphisms. This concept of ``not missing any automorphism'' can be made precise as follows. For any field extension $E/F$ of finite degree, we have the inequality $\abs{\Aut(E/F)} \leq [E : F]$. The condition that $E/F$ is Galois is equivalent to the statement that this inequality is actually an equality.

When an extension $E/F$ is Galois, there is a close correspondence between the intermediate fields of $E/F$ and the subgroups of $\Aut(E/F)$. First, we can associate any intermediate field $L$ of $E/F$ to the subgroup $\Aut(E/L)$ of $\Aut(E/F)$ consisting of those automorphisms of $E$ which fix every element of $L$. Second, we can associate any subgroup $H$ of $\Aut(E/F)$ to the intermediate field $E^H$ of $E/F$ consisting of those elements of $E$ which are fixed by every element of $H$.
For any extension $E/F$, these give inclusion-reversing maps between intermediate fields of $E/F$ and subgroups of $\Aut(E/F)$. When $E/F$ is Galois, these maps are inverse bijections. The existence of this inclusion-reversing bijection between intermediate fields of a Galois extension $E/F$ and subgroups of $\Aut(E/F)$ is referred to as the fundamental theorem of Galois theory or the Galois correspondence.


In this paper, and in the formalization project on which it is based, we follow a slightly unconventional route to proving the fundamental theorem of Galois theory. Namely, we make use of the primitive element theorem (which is usually treated essentially independently from the fundamental theorem of Galois theory). This theorem states that if $E/F$ is a separable field extension of finite degree then $E$ is generated over $F$ by a single element---i.e.\ there is some $\alpha \in E$ such that $E = F(\alpha)$, where $F(\alpha)$ denotes the smallest subfield of $E$ containing both $F$ and $\alpha$. We will explain in section \ref{section:ftgt} how we use the primitive element theorem in proving the fundamental theorem of Galois theory.

\section{The Primitive Element Theorem}

In this section, we will explain some of the things we encountered while formalizing the proof of the primitive element theorem.
An apparently minor point is that in order to even state the primitive element theorem, we need to first define what it means to adjoin an element to a field. In human mathematics, this is a relatively simple definition. But to formalize it, there are several important design decisions that must be made. 

These decisions took trial-and-error to get right and it turned out that making them in a good way had significant benefits later on in our project, including in our proof of the primitive element theorem. They also serve as helpful illustrations of some general points about the process of translating human mathematics into formal mathematics. For these reasons, the way that we made these decisions will be the main focus of this section.

\subsection{Field Extensions}
\label{section:algebra}
Before we discuss adjoining elements to fields, we need to briefly explain how \mathlib handles fields, field extensions, and towers of fields.

Fields in \mathlib are essentially the same as in human mathematics. In human math, we say ``$F$ is a field'' and we mean that $F$ consists of a set together with some operations on that set which satisfy the field axioms. In Lean, we write \code{(F : Type*) [field F]}, which means that $F$ is a type and we have an element of the type of field structures on $F$. Here, \code{field F} refers to the type of all field structures on $F$.

Field extensions in \mathlib, however, are somewhat different from human mathematics, but this is partly due to a useful ambiguity in human math. In human mathematics, the phrase ``$E/F$ is a field extension'' may mean that $F$ is literally a subset of $E$ or it may mean that $F$ embeds into $E$ (with the embedding usually being inferred from context). In fact, it is standard to switch between these two options whenever it is convenient. However, when formalizing the the definition of a field extension, we must pick one of these two meanings and which one we pick has consequences for how later definitions and proofs are formalized.



The course chosen by \mathlib is to use a version of the second option.
More precisely, in \mathlib, the way to express that $E/F$ is a field extension is to say that $E$ is a field which is also an $F$-algebra, which in \mathlib just means that there is a ring homomorphism\footnote{In general, \mathlib defines an $R$-algebra to be a semiring $A$ together with a ring homomorphism $f \colon R \to A$ such that for all $r \in R$ and $x \in A$, $f(r)$ commutes with $x$. In particular, algebras are associative and unital.} $F \to E$.
The benefit of this definition of a field extension is that it is more general than requiring $F$ to be a subset of $E$, and it allows lemmas about algebras and vector spaces to also be applied to field extensions.
To define a field extension in Lean, we write:
\begin{lstlisting}
(F E : Type*) [field F] [field E] [algebra F E]
\end{lstlisting}
which should be read as ``$F$ and $E$ are types, there are field structures on $F$ and $E$, and there is an $F$-algebra structure on $E$''.


The downside of this approach is that it can be cumbersome to handle situations where we have multiple different field extensions of $F$ which are related to each other in some way. The most common situation of this sort is when we have a tower of field extensions $K/E/F$. In human math, it is common to talk about such a tower by assuming that $F$ and $E$ are both subsets of $K$. But because of the way field extensions are defined in \mathlib, we can't do this. Instead, to keep things manageable, \mathlib has the structure \code{is_scalar_tower}.


The definition of \code{is_scalar_tower} is very general (which is part of its strength). It abstracts not just towers of field extensions, but more general sorts of towers as well. First, if $X$ and $Y$ are types then a scalar action of $X$ on $Y$ is just a map $\bullet \colon X\times Y \to Y$. Note that an $F$-algebra structure on a field $E$ is a special case of this. Now if $X$, $Y$, and $Z$ are types and we have scalar actions $\bullet \colon X \times Y \to Y$, $\bullet \colon X \times Z \to Z$, and $\bullet\colon Y\times Z \to Z$ then \code{is_scalar_tower} asserts the axiom $(x \bullet y)\bullet z = x\bullet(y\bullet z)$ for any elements $x$ of $X$, $y$ of $Y$, and $z$ of $Z$.
In the case of a tower of fields $K/E/F$, the \code{is_scalar_tower} axiom is just equivalent to the compatibility of the ring homomorphisms $F\to E\to K$ and $F\to K$.
To define a tower of field extensions in Lean, we write
\begin{lstlisting}
(F E K : Type*) [field F] [field E] [field K] [algebra F E] [algebra F K] [algebra E K]  [is_scalar_tower F E K]
\end{lstlisting}
This is a bit of a mouthful, but luckily you only have to write it once at the top of a file, and then you can use the setup throughout the file.

We believe that handling field extensions and towers of field extensions is beneficial because it allows lemmas to be stated and proved in more generality and to be reused in more contexts (we will see one example of this later) and that this is especially beneficial in the context of a project like \mathlib which aims to produce a large coherent library of formalized math. However, this extra generality is sometimes cumbersome to work with, since statements and proofs become more involved due to the need to talk about compatibility of algebra structures. And though structures like \code{is_scalar_tower} and Lean's automation help, they do not completely eliminate the inconvenience. 

Partly for these reasons, the authors of Coq's Mathematical Components library (a.k.a.\ mathcomp) chose to define field extensions differently. In their development of Galois theory, they first pick a large ambient field $L$ and most of the time only work with subfields of $L$. So a field extension $E/F$ just means a pair of subfields $E$ and $F$ of $L$ such that $F \subseteq E$. This means there is no need to check ``compatibility'' of the field extensions in a tower---they are always compatible because the embeddings are always just inclusion maps. Getting rid of such compatibility checks also reduces the compilation time of the formalized proofs. On the other hand, it is occasionally necessary to leave the field $L$ and this means that the mathcomp approach cannot totally avoid the difficulties of the \mathlib approach.

It seems that overall the differences between the mathcomp and \mathlib approaches to defining field extensions are not so much consequences of differences between Coq and Lean, but rather of differences between the goals and conventions of the two communities. The design decision of working with substructures of a large ambient structure is common in mathcomp. For example, one of the main formalization efforts that used mathcomp (and which drove a lot of the development of mathcomp) was the formalization of the Feit-Thompson theorem of finite group theory. That project followed a similar strategy of mostly working with subgroups of some large ambient group. In contrast, working in the setting of structures with maps between them (and using structures like \code{is_scalar_tower} to keep track of the complexity) is common in \mathlib.

\subsection{Adjoining Elements}


Mathematically, if $E/F$ is a field extension then adjoining elements $S\subseteq E$ to the base field $F$ is straightforward. Our definition of $F(S)$ in Lean follows the usual mathematical definition of taking the closure of $F\cup S$ under the field operations. However, there are some interesting structural decisions that we are forced to make.

Since Lean uses type theory, everything we define must have a type. So what should we choose for the type of $F(S)$? There are many options:
\begin{itemize}
    \item \code{set E}, the type of subsets of $E$,
    \item \code{subfield E}, the type of subfields of $E$,
    \item \code{subalgebra F E}, the type of $F$-subalgebras of $E$,
    \item \code{intermediate_field F E}, the type of $F$-subalgebras of $E$ which are also subfields of $E$.
\end{itemize}

When we started working on this project, the types \code{intermediate_field F E} and \code{subfield E} did not exist.
Instead, there was a predicate \code{is_subfield} on subsets of $E$. From the other two options, we initially chose \code{set E}, which turned out to be the source of a few headaches. For example, we had to manually show how to convert $F(S)$ into an $F$-submodule of $E$. And often lemmas had to be stated with additional assumptions that could have been inferred directly if we had chosen a different type.
For example, the following lemma states that an intermediate field of $E/F$ contains $F(S)$ if and only if it contains $S$, but requires the ugly assumption \code{HF}:
\begin{lstlisting}
lemma adjoin_le_iff {K : set E} [is_subfield K] (HF : set.range (algebra_map F E) ⊆ K) : adjoin F S ⊆ K ↔ S ⊆ K
\end{lstlisting}

To alleviate some of these issues, we eventually switched to using \code{subalgebra F E}. Later, Anne Baanen created the type \code{intermediate_field F E}, which turned out to be the best choice for the type of $F(S)$. It ensures that $F(S)$ ``remembers'' that it is both a field extension of $F$ and a subfield of $E$, which allows us to automatically invoke lemmas and theorems that depend on these assumptions. For example, we no longer have to explicitly show that $F(S)$ is an $F$-submodule of $E$---this fact is simply inherited from the fact that any \code{intermediate_field F E} is an $F$-submodule of $E$. It also allows us to simplify the statements of many lemmas. For example, the lemma stated above can now become: 
\begin{lstlisting}
lemma adjoin_le_iff {K : intermediate_field F E} : adjoin F S ≤ K ↔ S ⊆ K
\end{lstlisting}
Finally, there is no real disadvantage to using \code{intermediate_field} compared to any of the alternatives; it is always possible to turn an \code{intermediate_field F E} into a \code{subfield E} or a \code{subalgebra F E} if the need ever arises.

What we learned is that, as a general rule of thumb, it is always best to give mathematical objects the most specific type available and that sometimes it is necessary to create a new type if the none of the available ones are specific enough. The reason is that this ensures that no information is lost---we can always convert a more specific type into a less specific one, but not vice-versa. As we saw when we tried using \code{set E} as the type of $F(S)$, if we choose a type which is too general then we will often need to insert extra assumptions in the statements of lemmas. This leads to unnatural looking statements which are harder to work with. When the right type is chosen, these assumptions become implicit in the type of the object, leading to statements that look closer to their human mathematics equivalents.

The creation of the type \code{intermediate_field F E} illustrates a general point about formalizing math, which will show up several more times in this paper: it is important to find abstractions which hide the right amount of information---especially information which is normally left implicit in human mathematics. Otherwise, it is easy for formalized statements to quickly become too complicated and hard for humans to parse.




\subsection{The Lattice Structure of Intermediate Fields}
Whenever you define something new in Lean, the first thing to do is to prove lots of little lemmas about how the definition interacts with previously defined notions. Sometimes this process leads to a new and simpler perspective.
The lemma \code{adjoin_le_iff} from the previous section provides an example of this: an intermediate field of $E/F$ contains $F(S)$ if and only if it contains $S$. In human mathematics, this is expressed by the following (obvious) lemma.
\begin{lemma}
\label{lem:adjoin_le_iff}
Let $E/F$ be a field extension, $S$ a subset of $E$, and $K$ an intermediate field of $E/F$. Then $F(S)\subseteq K$ if and only if $S\subseteq K$.
\end{lemma}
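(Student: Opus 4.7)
The plan is to prove the two directions separately, using the definition of $F(S)$ as the closure of $F \cup S$ under the field operations (equivalently, the smallest intermediate field of $E/F$ containing $S$).

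For the forward direction, I would first note that $S \subseteq F(S)$ holds directly from the definition: every element of $S$ lies in $F \cup S$, which is contained in its own closure. So if $F(S) \subseteq K$, then transitivity of inclusion gives $S \subseteq K$.

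For the backward direction, suppose $S \subseteq K$. Since $K$ is an intermediate field of $E/F$, the image of $F$ in $E$ is contained in $K$, so $F \cup S \subseteq K$. But $K$ is closed under the field operations of $E$, so any set obtained by iteratively applying those operations to elements of $F \cup S$ must again lie in $K$. The closure of $F \cup S$ under the field operations is, by definition, exactly $F(S)$, and hence $F(S) \subseteq K$.

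Mathematically there is no real obstacle here; the content of the lemma is just one half of the universal property that characterizes $F(S)$ as the smallest intermediate field containing $S$. The interesting aspect is structural rather than mathematical: the argument silently uses that $K$ is closed under the field operations and contains (the image of) $F$, and these are precisely the facts that are bundled into the type \code{intermediate_field F E}. This is what allows the formalized statement to drop the ungainly hypothesis \code{HF} from the earlier \code{set E}-based version and to read almost identically to its human-mathematics counterpart. Once the lemma is in this clean form, it is exactly the statement that \code{adjoin F} is left adjoint to the coercion from \code{intermediate_field F E} to subsets of $E$, setting up the Galois insertion that the paper will exploit next.
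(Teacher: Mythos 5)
Your proof is correct and matches the paper's treatment: the paper states this lemma as ``obvious,'' relying on exactly the argument you spell out, namely that $F(S)$ is by definition the closure of $F \cup S$ under the field operations, so both directions follow from $S \subseteq F(S)$ and from $K$ being closed under those operations and containing the image of $F$. Your closing structural remark---that the lemma expresses the adjunction between \code{adjoin F} and the coercion to sets, feeding into the Galois insertion---is also precisely the point the paper makes in the surrounding discussion.
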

Earlier, we translated this statement into Lean as:
\begin{lstlisting}
lemma adjoin_le_iff {K : intemediate_field F E} : adjoin F S ≤ K ↔ S ⊆ K
\end{lstlisting}
However, if we are not careful then Lean will complain about this statement. The problem is that $S \subseteq K$ doesn't seem to make sense since $S$ has type \code{set E} but $K$ has type \code{intermediate_field F E}. When a human reads $S \subseteq K$ they immediately infer that an intermediate field of $E/F$ can also be thought of as a subset of $E$, but Lean needs more help. At first, this seems purely like an annoyance. But as we will see, dealing with it actually helps make clear a pattern which occurs frequently in mathematics and which we can exploit to automatically infer many basic facts about intermediate fields.

The solution to the problem is to realize that the conversion of an intermediate field of $E/F$ into a subset of $E$, which humans can perform effortlessly, actually consists of applying a function from intermediate fields to subsets (in category theoretic language, this is the forgetful functor from intermediate fields to subsets). In Lean, this function is called a \textit{coercion} and it needs to be manually defined. Once the coercion has been defined, Lean will be able to automatically apply it in appropriate situations and will no longer complain when we write things like $S \subseteq K$.

Having to explicitly provide Lean with the coercion highlights the fact that Lemma \ref{lem:adjoin_le_iff} should really be viewed as a relationship between the adjoining map \code{set E → intermediate_field F E} and the coercion \code{intermediate_field F E → set E}. In fact, this relationship is very common in mathematics. The definition below is one way to capture this relationship abstractly (which some mathematicians might recognize as a special case of adjoint functors).



\begin{definition}
Let $P$ and $Q$ be partially ordered sets.
A \textit{Galois connection} between $P$ and $Q$ consists of two order-preserving functions $f\colon P\to Q$ and $g\colon Q\to P$ such that $f(p)\leq q$ if and only if $p\leq g(q)$.
A \textit{Galois insertion} of $Q$ into $P$ is a Galois connection that satisfies $f\circ g=\id_Q$.\footnote{The reason that Galois' name appears here is that the Galois correspondence is an example of a Galois insertion.}
\end{definition}

Thus, we see that Lemma \ref{lem:adjoin_le_iff} states that the adjoining map \code{set E → intermediate_field F E} and the coercion \code{intermediate_field F E → set E} form a Galois connection.
In fact, it is not hard to see that they form a Galois insertion of \code{intermediate_field F E} into \code{set E}.

\begin{lemma}
Let $E/F$ be a field extension. The map $S \mapsto F(S)$ and the forgetful functor from intermediate fields of $E/F$ to subsets of $E$ form a Galois insertion of the partial order of intermediate fields of $E/F$ into the partial order of subsets of $E$.
\end{lemma}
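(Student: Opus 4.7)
The plan is to unpack the definition of a Galois insertion and verify its two components. Write $f$ for the adjoining map $S \mapsto F(S)$ from \code{set E} to \code{intermediate\_field F E}, and $g$ for the coercion that sends an intermediate field to its underlying subset of $E$. A Galois insertion consists of (a) a Galois connection, i.e.\ the adjunction-style equivalence $f(S) \leq K \iff S \subseteq g(K)$, together with (b) the identity $f \circ g = \id$. Monotonicity of $f$ and $g$ then comes for free from (a), so there are only two real obligations.

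Part (a) is essentially a restatement of Lemma \ref{lem:adjoin_le_iff}. Once we spell out that the order on \code{intermediate\_field F E} is defined via the coercion $g$, the condition $f(S) \leq K$ is the same as $F(S) \subseteq K$, and Lemma \ref{lem:adjoin_le_iff} tells us this is equivalent to $S \subseteq K$. So this half should be nearly immediate; in Lean it would amount to unfolding the definitions and applying \code{adjoin\_le\_iff}.

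Part (b) is the content-carrying step. We must show that for every intermediate field $K$ of $E/F$ we have $F(K) = K$. The inclusion $K \subseteq F(K)$ is direct from the construction of $F(K)$ as a closure containing $K$. For the reverse inclusion $F(K) \subseteq K$, we use the fact that $K$ already contains (the image of) $F$ and is itself closed under the field operations, so $K$ is one of the candidate subsets over which the closure defining $F(K)$ is taken; hence $F(K) \subseteq K$. Alternatively, one can derive this from Lemma \ref{lem:adjoin_le_iff} applied with $S = K$ and the already-established inclusion $K \subseteq K$.

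The main obstacle is not mathematical but bookkeeping: the statement mixes objects of type \code{set E} and \code{intermediate\_field F E}, and the equality $f \circ g = \id$ must be interpreted as equality of intermediate fields (and in turn likely proved by showing equality of the underlying sets and then lifting). Care is needed to ensure that all inclusions and equalities typecheck through the coercion, and to invoke the right lemmas saying that an intermediate field, viewed as a subset, contains the image of the algebra map $F \to E$ and is closed under the field operations. Once these coercion issues are resolved, the mathematical content reduces to a one-line application of \code{adjoin\_le\_iff} and the defining properties of \code{intermediate\_field}.
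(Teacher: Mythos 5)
Your proposal is correct and follows essentially the same route as the paper: the Galois connection half is exactly Lemma~\ref{lem:adjoin_le_iff} (\code{adjoin_le_iff}), and the insertion half $f \circ g = \id$ reduces to $F(K) = K$ for an intermediate field $K$, which holds because $K$ contains $F$ and is closed under the field operations (the paper leaves this as ``not hard to see''). Your attention to the coercion bookkeeping and to proving the equality at the level of underlying sets matches how the formalization actually handles it.
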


Proving this lemma allows us to automatically import all the facts in \mathlib that have already been proved about Galois insertions. For example, the following lemma is in \mathlib.
Recall that a complete lattice is a partially ordered set in which every subset has both a greatest lower bound and a least upper bound.


\begin{lemma}
\label{lem:complete_lattice}
Suppose that there is a Galois insertion of $Q$ into $P$. If $P$ is a complete lattice then so is $Q$.
\end{lemma}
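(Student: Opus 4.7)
The plan is to show that every subset $T \subseteq Q$ has both a supremum and an infimum, by transporting the corresponding extrema from $P$ through the two maps of the Galois insertion. Write the Galois insertion as $f \colon P \to Q$ and $g \colon Q \to P$, with $f(p) \leq q \iff p \leq g(q)$ and $f \circ g = \id_Q$.

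Given $T \subseteq Q$, I would form the image $g(T) \subseteq P$ and, using completeness of $P$, let $s = \sup g(T)$ and $i = \inf g(T)$. My claim is that $f(s)$ is the supremum of $T$ in $Q$ and $f(i)$ is the infimum of $T$ in $Q$. To verify that $f(s)$ is an upper bound for $T$, I would take $t \in T$, note that $g(t) \leq s$, apply the order-preserving map $f$ to get $f(g(t)) \leq f(s)$, and then invoke $f \circ g = \id_Q$ to rewrite the left side as $t$. For the least-upper-bound property, I would suppose $t \leq u$ for all $t \in T$, apply $g$ to get $g(t) \leq g(u)$, conclude $s \leq g(u)$, and finally use the Galois connection to pass to $f(s) \leq u$.

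The argument for $f(i) = \inf T$ is dual. From $i \leq g(t)$ for each $t \in T$, the Galois connection gives $f(i) \leq t$ directly, so $f(i)$ is a lower bound. If $u \leq t$ for all $t \in T$, then monotonicity of $g$ yields $g(u) \leq i$, and applying $f$ together with $f(g(u)) = u$ gives $u \leq f(i)$.

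The only subtle step — and the main place to be careful — is the use of the insertion identity $f \circ g = \id_Q$ rather than merely a Galois connection. For a Galois connection one only has $f \circ g \leq \id_Q$, which would give $f(g(t)) \leq f(s)$ but not the needed $t \leq f(s)$; it is precisely the insertion hypothesis that lets us recover $t$ on the left. Once this is pinpointed, the rest of the proof is a routine application of the defining adjunction inequality and monotonicity of $f$ and $g$.
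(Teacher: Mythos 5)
Your proof is correct, and it is exactly the standard argument: transport a subset $T\subseteq Q$ to $g(T)\subseteq P$, take its supremum/infimum there, and push back along $f$, using the adjunction inequality in one direction and the insertion identity $f\circ g=\id_Q$ in the other. The paper itself gives no proof of this lemma---it simply invokes the version already formalized in \mathlib---and your construction coincides with the one underlying that \mathlib proof, including your accurate diagnosis that a mere Galois connection (where one only has $f\circ g\leq\id_Q$) would not suffice.
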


Applying this lemma shows that \code{intermediate_field F E} is a complete lattice. This in turn gives us many notions for free. For example, it gives us the bottom intermediate field (denoted $\bot$), and the top intermediate field (denoted $\top$), as well as intersections and compositums of intermediate fields. It also gives us access to any fact about complete lattices that is contained in \mathlib. This is amazing since all we had to do, more or less, was provide the definition of adjoining elements to fields and prove Lemma \ref{lem:adjoin_le_iff}---both of which we were going to do anyways.

This trick of pulling back lattice structure along Galois insertions was not discovered by us.
We learned it from looking at Kenny Lau's work on subalgebras in \mathlib.
The trick is also used in \mathlib to get the lattice structure on partitions and on a topological space.

This section has highlighted a particular example of a larger theme, which is that formalizing math (especially in a system based on type theory, like Lean) forces you to make explicit many things that are normally left implicit in human mathematics. Sometimes this is merely an annoyance, but at other times it can actually be helpful because it makes you notice structural patterns which are repeated in many different situations. These patterns can be abstracted out into highly versatile theorems which can then be invoked in all similar situations.


\subsection{An Induction Scheme for Intermediate Fields}

One reason that it is useful to formalize the definition of adjoining elements to a field is that it allows results to be proved inductively by adjoining one element at a time. Here's how this typically works in human mathematics: we start with a field extension $E/F$ of finite degree. We then pick some finite sequence of elements $\alpha_1, \alpha_2, \ldots, \alpha_n$ such that $F(\alpha_1, \alpha_2, \ldots, \alpha_n) = E$ and prove by induction on $n$ that whatever property we are interested in holds for $F(\alpha_1, \alpha_2, \ldots, \alpha_n)$.

It is possible to carry out this sort of proof in Lean, but it usually does not feel very natural. In particular, it can often be a little difficult to work with arbitrary finite sequences of elements. One way to avoid having to deal with these difficulties is to formulate an induction principle for intermediate fields. The idea is that if we can abstract out the requirements necessary to do the kind of induction we described above, we can avoid having to actually work with finite sequences of elements (except in the proof of the abstract induction principle itself). This maneuver is mostly just a way to avoid having to deal with certain types of arguments that don't work very smoothly in Lean, but it does have the added benefit that some standard proofs become simpler when rewritten using this new induction principle.

Here's the induction principle.

\begin{lemma}
\label{lem:induction1}
Let $E/F$ be a field extension, $S$ a finite subset of $E$, and $P$ be a predicate on intermediate fields of $E/F$. Suppose that $P$ holds of $F$ and that for all intermediate fields $K$ of $E/F$ and all $\alpha\in S$, if $P$ holds of $K$ then $P$ holds of $K(\alpha)$. Then $P$ holds of $F(S)$.
\end{lemma}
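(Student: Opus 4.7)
The plan is to prove the slightly stronger statement that for every finite subset $T \subseteq S$, the predicate $P$ holds of the intermediate field $F(T)$, and then specialize this to $T = S$. This reformulation lets us keep the original hypothesis on $\alpha \in S$ unchanged at every step of the induction, rather than having to weaken it as the set being inducted on shrinks.

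Concretely, I would proceed by induction on the finite set $T$. In the base case $T = \emptyset$, we have $F(\emptyset) = \bot$ (the bottom intermediate field of $E/F$), which is identified with $F$ itself; so $P(F(\emptyset))$ is exactly the first hypothesis. For the inductive step, write $T = T' \cup \{\alpha\}$ with $\alpha \notin T'$. By the induction hypothesis, $P(F(T'))$ holds, and since $\alpha \in T \subseteq S$ we have $\alpha \in S$; so the second hypothesis applied with $K = F(T')$ yields $P(F(T')(\alpha))$. Combining this with the identity $F(T' \cup \{\alpha\}) = F(T')(\alpha)$ gives $P(F(T))$, completing the induction.

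The main obstacle is not conceptual but bookkeeping. Paper mathematics silently uses two identifications that in Lean require small supporting lemmas: (i) $F(\emptyset) = \bot$ as intermediate fields of $E/F$, and (ii) $F(T' \cup \{\alpha\}) = F(T')(\alpha)$. Both are most naturally established via the Galois insertion between subsets of $E$ and intermediate fields from the previous section: adjoining a union agrees with the join of the individual adjoinings, and iterated adjunction agrees with adjoining once, so both identities should fall out of facts already available for Galois insertions together with basic properties of the adjoin map. Once these auxiliary lemmas are in place, the induction itself is a direct application of finite-set induction on $T$ and should require only a few lines.
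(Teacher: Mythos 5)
Your proposal is correct and matches the paper's (implicit) argument: the paper proves this lemma by exactly the same one-element-at-a-time induction, resting on the key identity $F(S)(T) = F(S\cup T)$ (your identity (ii) is its singleton case) together with the coercion from intermediate fields of $E/K$ to intermediate fields of $E/F$, and on identifying $F$ with the bottom intermediate field $\bot$ for the base case. Your refinement of inducting over subsets $T\subseteq S$ so the hypothesis on $\alpha\in S$ stays fixed is a reasonable bookkeeping device, not a departure from the paper's approach.
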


In the next section, we will see how to use this induction principle to prove the primitive element theorem. But first, let's point out one issue in formalizing Lemma \ref{lem:induction1} in Lean. The lemmas depend on the following fact.

\begin{lemma}
Let $E/F$ be a field extension and let $S,T\subseteq E$. Then $F(S)(T) = F(S\cup T)$.
\end{lemma}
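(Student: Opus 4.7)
The plan is to prove the equality by antisymmetry of the partial order on intermediate fields, using the Galois connection characterization from Lemma \ref{lem:adjoin_le_iff} to establish each inclusion. The intuitive picture is that both sides should be characterized as the smallest intermediate field of $E/F$ containing $S \cup T$, so once the types are sorted out, each direction is a single application of the universal property.

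First I would show $F(S \cup T) \subseteq F(S)(T)$. Although $F(S)(T)$ is defined as an intermediate field of $E/F(S)$, the tower $F \subseteq F(S) \subseteq F(S)(T) \subseteq E$ lets us also regard it as an intermediate field of $E/F$ (this is where the \code{is_scalar_tower} infrastructure from Section \ref{section:algebra} does its job). Viewed as a subset of $E$, it contains $F(S)$, which in turn contains $S$, and it also contains $T$ by construction; hence it contains $S\cup T$. Applying \code{adjoin_le_iff} in the partial order of intermediate fields of $E/F$ gives $F(S\cup T) \subseteq F(S)(T)$.

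For the reverse inclusion, observe that $F(S \cup T)$, as an intermediate field of $E/F$ containing the subset $S$, contains $F(S)$ by one application of \code{adjoin_le_iff}. This inclusion equips $F(S \cup T)$ with the structure of an intermediate field of $E/F(S)$. Since it also contains $T$, a second application of \code{adjoin_le_iff}, this time in the partial order of intermediate fields of $E/F(S)$, yields $F(S)(T) \subseteq F(S \cup T)$. Combining the two inclusions gives the equality.

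The main obstacle will not be any mathematical subtlety but the bookkeeping of coercions and scalar-tower instances. In human mathematics this lemma is essentially content-free, but in Lean we must simultaneously view $F(S)(T)$ as an \code{intermediate_field (F(S)) E} (to make the notation meaningful) and as an \code{intermediate_field F E} (to compare with $F(S\cup T)$), and we must similarly upgrade $F(S\cup T)$ to an intermediate field of $E/F(S)$ after verifying it contains $F(S)$. Once these instances and the corresponding \code{is_scalar_tower} are supplied, the proof collapses to two invocations of \code{adjoin_le_iff} together with antisymmetry.
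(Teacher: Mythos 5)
Your proof is correct, and it is essentially the canonical one: both inclusions follow from the universal property of adjunction (\code{adjoin_le_iff}), with the only real work being the coercion of $F(S)(T)$ from an intermediate field of $E/F(S)$ to one of $E/F$. Note that the paper never actually proves this lemma --- it states it only to highlight the type-theoretic obstacle that $F(S)(T)$ and $F(S\cup T)$ have different types, and to motivate the coercion \code{intermediate\_field K E → intermediate\_field F E}; your proposal both supplies the missing mathematical argument and correctly identifies that same coercion bookkeeping as the real cost of the formalization, so it is fully consistent with the paper's discussion.
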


The trouble with this fact is that the field $F(S\cup T)$ has type \code{intermediate_field F E}, but the field $F(S)(T)$ has type \code{intermediate_field F(S) E} and in type theory you can only say that two objects are equal if they have the same type. The solution is the same as the one described in the previous section: we need to define a coercion from \code{intermediate_field K E} to \code{intermediate_field F E} whenever $K$ is an intermediate field of $E/F$. In fact, we also need this coercion when formalizing the statement of the induction lemmas in order to make sense of the phrase ``$P$ holds of $K(\alpha)$'' (recall that $P$ was a predicate on intermediate fields of $E/F$ whereas $K(\alpha)$ is an intermediate field of $E/K$).

We will note here that instead of using the induction scheme we have described in this section, it is often possible to instead use induction on the degree of the field extension. One reason we avoided this in our work was that it is a bit messy in Lean (for technical reasons having to do with the way universes are handled in Lean). However, the team that developed Galois theory in Coq did use this approach sometimes. They did not run into the technical inconveniences we encountered with this approach in part because they were working with subfields of a single large ambient field, rather than with arbitrary fields, as we did. Both ways of carrying out such induction arguments seem to work fine, but as we have already mentioned, once we had formulated the induction scheme explained here, it did seem to make a number of proofs smoother.

\subsection{The Primitive Element Theorem}

In this section we will explain our formalization of the proof of primitive element theorem.

\begin{theorem}[Primitive Element Theorem]
Let $E/F$ be a separable field extension of finite degree. Then $E = F(\alpha)$ for some $\alpha\in E$.
\end{theorem}

Let's first review the usual proof of this theorem, following the proof in Milne's \emph{Fields and Galois Theory} textbook \cite{milne2020fields}. The proof splits into two cases depending on whether the fields are finite or infinite. If the fields are finite then $\alpha$ can be taken to be the generator of the finite cyclic group $E^\times$. If the fields are infinite then the theorem can be proved by induction on $[E : F]$. The inductive step relies on the following key lemma.

\begin{lemma}
\label{lem:inductive_step}
If $F$ is infinite and if $E/F$ is a separable field extension then for all $\alpha,\beta\in E$, there exists a $\gamma\in E$ such that $F(\alpha, \beta) = F(\gamma)$.
\end{lemma}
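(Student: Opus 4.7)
The plan is to follow the classical proof, which works by picking $\gamma$ to be a suitable $F$-linear combination of $\alpha$ and $\beta$, using the hypothesis that $F$ is infinite to avoid a finite ``bad'' set of coefficients, and using separability to guarantee that the relevant roots are distinct.

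First I would set up an ambient field in which both $\alpha$ and $\beta$ have full sets of conjugates: let $f, g$ be the minimal polynomials of $\alpha, \beta$ over $F$ and pass to a splitting field $K$ of the product $fg$ over $F(\alpha,\beta)$ (or, equivalently, to an algebraic closure containing $F(\alpha,\beta)$). Let $\alpha = \alpha_1, \ldots, \alpha_n$ be the roots of $f$ in $K$ and $\beta = \beta_1, \ldots, \beta_m$ the roots of $g$. By the separability hypothesis, both lists consist of distinct elements, so the expressions $(\alpha_i - \alpha)/(\beta_j - \beta)$ are well-defined for all $i$ and all $j \neq 1$; there are only finitely many such values. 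Since $F$ is infinite, I can pick $c \in F$ avoiding all of them, and set $\gamma := \alpha + c\beta$.

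The inclusion $F(\gamma) \subseteq F(\alpha,\beta)$ is immediate. For the reverse inclusion it suffices to show $\beta \in F(\gamma)$, since then $\alpha = \gamma - c\beta \in F(\gamma)$ as well. To see this, consider the two polynomials $g(x)$ and $h(x) := f(\gamma - cx)$, both of which lie in $F(\gamma)[x]$. Both vanish at $x = \beta$, and by the choice of $c$ one checks that $h(\beta_j) = f(\gamma - c\beta_j) \neq 0$ for $j \neq 1$, because $\gamma - c\beta_j = \alpha + c(\beta - \beta_j)$ is none of the $\alpha_i$. Hence the gcd of $g$ and $h$ in $K[x]$ is the linear polynomial $x - \beta$. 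Since gcds are invariant under field extension (up to associates), this gcd is already in $F(\gamma)[x]$, which forces $\beta \in F(\gamma)$.

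The hard part of the formalization will be step where we transfer the gcd computation between the ambient field $K$ and the subfield $F(\gamma)$: in human mathematics one says ``the gcd is the same in any extension,'' but in Lean this will require explicit lemmas about $\gcd$ in polynomial rings under base change, together with care about how the roots $\alpha_i, \beta_j$ are being viewed (as elements of $K$, while $\gamma \in E$ and $F(\gamma)$ is an intermediate field of $E/F$). Managing these coercions between $E$, $K$, and the various intermediate fields, and making sure Lean's \code{is\_scalar\_tower} instances line up so that polynomials over $F$, $F(\gamma)$, and $K$ can be compared, will likely take more work than the algebraic content. By contrast, the counting argument producing $c$ is straightforward once the ambient field and distinctness of roots are in place.
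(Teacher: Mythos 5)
Your proposal is correct and takes essentially the same approach as the paper: the paper does not spell out a proof of this lemma, instead deferring to the classical argument in Milne's textbook, which is precisely the $\gamma = \alpha + c\beta$ construction with the gcd-of-$g(x)$-and-$f(\gamma - cx)$ argument you give. One trivial slip worth fixing: the values of $c$ to avoid are the solutions of $\alpha + c(\beta - \beta_j) = \alpha_i$, i.e.\ $(\alpha_i - \alpha)/(\beta - \beta_j)$ rather than $(\alpha_i - \alpha)/(\beta_j - \beta)$, but either way the bad set is finite and the argument is unaffected.
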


To apply this lemma, we first show that if $[E : F] > 1$ then there is some element $\alpha\in E$ which is not in $F$. If $\alpha$ generates $E$ over $F$ then we are done. If not, then we can show that $[E : F(\alpha)] < [E : F]$. Then we can invoke the inductive hypothesis to write $E=F(\alpha)(\beta)$ for some $\beta\in E$, and we can apply Lemma \ref{lem:inductive_step} to finish.

One problem with formalizing this proof is that if we do induction on $[E : F]$ then the inductive hypothesis includes a quantifier over all field extensions of $K/L$ of degree less than $[E : F]$. In Lean, this is problematic because we are only allowed to quantify over elements of some specific type and there is no type of all field extensions\footnote{For type theory aficionados: Lean's type theory uses an infinite (non-cumulative) hierarchy of universes. So we \emph{can} form the type of all field extensions that are in any fixed universe, but we prefer to prove the theorem without putting any restrictions on the universes of $F$ and $E$.} (just like in set theory there is no set of all field extensions\footnote{In set theory, however, we are allowed to quantify over this collection even though it is not a set.}).

There are ways around this problem, but it turns out that in this case, a cleaner solution is to replace the induction on the degree of $E/F$ with the induction scheme we introduced in the last section. More precisely, let $P$ be the predicate on intermediate fields of $E/F$ defined by:
\[
P(K) \iff K = F(\alpha) \text{ for some } \alpha \in E.
\]
If we apply the induction scheme of Lemma \ref{lem:induction1} to this predicate then we see that the base case is easy to prove (it is just asking us to show that $F$ can be generated over $F$ by one element) and the inductive step is given by Lemma \ref{lem:inductive_step}.

Switching to this induction scheme has the additional benefit of streamlining the proof in other ways. For example, when proving the inductive step, we do not need the degree to decrease so it is no longer necessary to break into cases depending on whether $\alpha$ generates $E$ over $F$. Also, we no longer need to explicitly show that $E$ contains some element not in $F$---this is essentially baked into the induction scheme.

We will conclude this section by pointing out something interesting about the way that the primitive element theorem is stated in Lean. Here's what it looks like.
\begin{lstlisting}
theorem exists_primitive_element [finite_dimensional F E] (F_sep : is_separable F E) :
  ∃ α : E, F⟮α⟯ = ⊤
\end{lstlisting}
Note that instead of asserting that $F(\alpha) = E$, the theorem says that $F(\alpha) = \top$ (read ``top''). This is because $F(\alpha)$ has type \code{intermediate_field F E} but $E$ does not, and thus they cannot be equal. So instead we prove that $F(\alpha)$ is equal to the largest intermediate field of $E/F$ (which a human mathematician would just automatically identify with $E$). In some other situations, it is necessary to make a similar distinction between $F$ and $\bot$ (read ``bottom''), the smallest intermediate field of $E/F$.

\subsection{Adjoining One Element}

In order to leverage the primitive element theorem in our development of Galois theory, we will need the following two lemmas.

\begin{lemma}
\label{lem:adjoin_simple_degree}
Let $E/F$ be a field extension. If $\alpha\in E$ is algebraic over $F$, with minimal polynomial $m\in F[x]$, then $[F(\alpha) : F] = \deg m$.
\end{lemma}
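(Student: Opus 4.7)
The plan is to exhibit an explicit $F$-basis for $F(\alpha)$ of size $n := \deg m$, namely the power basis $\{1, \alpha, \alpha^2, \ldots, \alpha^{n-1}\}$. Equivalently, one shows that the evaluation map $\mathrm{ev}_\alpha \colon F[x] \to E$ descends to an $F$-algebra isomorphism $F[x]/(m) \cong F(\alpha)$, and then computes the dimension of the left-hand side directly.

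Concretely, I would first show that the image of $\mathrm{ev}_\alpha$, i.e.\ the subalgebra $F[\alpha]$, coincides with the intermediate field $F(\alpha)$. Since $\alpha$ is algebraic, $F[\alpha]$ is already a field: given $p(\alpha) \neq 0$, the polynomial $p$ is coprime to $m$ (else $m \mid p$ would force $p(\alpha) = 0$), so a B\'ezout identity $up + vm = 1$ evaluated at $\alpha$ gives an inverse for $p(\alpha)$. Next I would show that the kernel of $\mathrm{ev}_\alpha$ is the principal ideal $(m)$, which is essentially the definition of the minimal polynomial. Combining these yields the $F$-algebra isomorphism $F[x]/(m) \cong F(\alpha)$, which is in particular an isomorphism of $F$-vector spaces.

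Then I would compute $\dim_F F[x]/(m) = n$ by exhibiting $\{[1], [x], \ldots, [x^{n-1}]\}$ as a basis. Spanning follows from polynomial division by $m$: every residue class is represented by a remainder of degree less than $n$. Linear independence follows because any nonzero linear combination $\sum_{i<n} c_i x^i$ lying in $(m)$ would have degree $\geq n$. Transporting this basis across the isomorphism gives the power basis $\{1, \alpha, \ldots, \alpha^{n-1}\}$ of $F(\alpha)$, and hence $[F(\alpha) : F] = n = \deg m$.

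The main obstacle I anticipate is not the mathematics, which is routine, but rather the bookkeeping around types and coercions: $F[\alpha]$ lives naturally as a \code{subalgebra F E}, $F(\alpha)$ has type \code{intermediate_field F E}, and $F[x]/(m)$ is yet another type constructed as a quotient ring. The equalities and isomorphisms between these are entirely transparent in human mathematics, but in Lean each must be realized as an explicit map, with the correct \code{algebra_map}s and coercions threaded through. It is likely that \mathlib already provides a \code{power_basis} construction for $F(\alpha)$ whose cardinality lemma essentially packages exactly this theorem; if so, the real task reduces to assembling the spanning and independence facts above into that framework rather than any genuine algebraic work.
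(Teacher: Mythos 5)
Your proposal is mathematically correct, and its first half coincides exactly with the paper's route: both establish the $F$-algebra isomorphism $F(\alpha) \cong F[x]/(m)$ via the evaluation map (image $F[\alpha]$ is a field since $\alpha$ is algebraic, kernel is $(m)$), reducing the problem to computing $\dim_F F[x]/(m)$. Where you diverge is in that final dimension count. You propose exhibiting $\{1, x, \ldots, x^{\deg m - 1}\}$ as a basis, with spanning from division with remainder and independence from a degree argument; this is precisely the step the paper's authors deliberately avoided, reporting that formalizing this basis ``turned out to be surprisingly difficult'' in Lean because working directly with bases, linear combinations, and finite lists is awkward there. Instead, they computed the dimension by constructing an $F$-linear isomorphism between $F[x]/(m)$ and the vector space of polynomials over $F$ of degree less than $\deg m$, sidestepping explicit bases entirely. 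Your anticipated obstacle (type-theoretic bookkeeping between \code{subalgebra F E}, \code{intermediate_field F E}, and the quotient ring) is real but turned out not to be the main pain point; the paper says the isomorphism and universal-property arguments formalize smoothly, and the basis manipulation is what hurts. Your guess that \mathlib provides a \code{power_basis} packaging this theorem is correct but anachronistic relative to the paper's proof: Anne Baanen later did exactly that hard work, and the \code{power_basis} abstraction it produced pays off when explicit bases are genuinely needed (e.g.\ for traces in number fields), whereas the paper's isomorphism trick buys a shorter path to the degree formula alone.
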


\begin{lemma}
\label{lem:adjoin_simple_hom}
Let $E/F$ be a field extension and let $\alpha\in E$ be algebraic over $F$ with minimal polynomial $m\in F[x]$. For any field extension $K/F$, the number of $F$-algebra homomorphisms $F(\alpha)\to K$ is equal to the number of roots of $m$ in $K$.
\end{lemma}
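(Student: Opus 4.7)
The plan is to exhibit an explicit bijection between the set of $F$-algebra homomorphisms $F(\alpha) \to K$ and the set of roots of $m$ in $K$, from which the cardinality statement follows immediately.

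First I would construct the forward map $\Phi \mapsto \Phi(\alpha)$. This lands in the set of roots of $m$ in $K$ because any $F$-algebra homomorphism $\Phi\colon F(\alpha) \to K$ satisfies $m(\Phi(\alpha)) = \Phi(m(\alpha)) = \Phi(0) = 0$, using that $m$ has coefficients in $F$ and that $\Phi$ restricts to the identity on $F$ (via the algebra structure maps). Injectivity of this map is essentially the fact that an $F$-algebra homomorphism on $F(\alpha)$ is determined by its value on $\alpha$, since $F(\alpha)$ is generated as an $F$-algebra by $\alpha$ (because $\alpha$ is algebraic, so $F(\alpha) = F[\alpha]$).

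Next I would construct the inverse map: given a root $\beta \in K$ of $m$, produce an $F$-algebra homomorphism $F(\alpha) \to K$ sending $\alpha$ to $\beta$. The cleanest way is to use the isomorphism $F[x]/(m) \cong F(\alpha)$ that exists whenever $m$ is the minimal polynomial of $\alpha$ (this should already be available in \mathlib, e.g.\ via \code{adjoin_root} and the quotient by the minimal polynomial, or via \code{power_basis} on $F(\alpha)$). The evaluation map $F[x] \to K$ at $\beta$ factors through $F[x]/(m)$ precisely because $m(\beta) = 0$, giving the desired homomorphism. Composing with the isomorphism $F(\alpha) \cong F[x]/(m)$ gives the required map, and sending this construction back through the forward map clearly recovers $\beta$. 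Surjectivity of $\Phi \mapsto \Phi(\alpha)$ is then just the existence of this inverse.

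Finally, since $m$ has finitely many roots in $K$ (it is a nonzero polynomial), both sides are finite, and the bijection yields the equality of cardinalities.

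The main obstacle, I expect, will not be mathematical but bureaucratic: getting the types and coercions to line up. Concretely, $F(\alpha)$ has type \code{intermediate_field F E}, but the isomorphism with $F[x]/(m)$ naturally lives at the level of $F$-algebras or even as a ring isomorphism, so one has to coerce $F(\alpha)$ to a subalgebra (or to a type equipped with its induced algebra structure) and keep track of the \code{is_scalar_tower} instances so that ``$F$-algebra homomorphism'' means the same thing on both sides. It will also be important to package the finite set of roots and the finite set of homomorphisms as \code{finset} or \code{fintype}s in a compatible way so that the bijection upgrades into an equality of cardinalities, rather than merely a set-theoretic bijection.
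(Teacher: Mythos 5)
Your proposal is correct and matches the paper's own proof: both pass through the isomorphism $F(\alpha)\cong F[x]/(m)$, use the universal property of $F[x]/(m)$ (evaluation at a root factoring through the quotient) to construct the inverse, and conclude by an explicit bijection between homomorphisms and roots. Even your closing remark about the real difficulty being coercions and type bookkeeping (e.g.\ \code{intermediate_field} versus subalgebra, \code{power_basis}) echoes the paper's discussion of this lemma.
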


In human mathematics, neither result is hard to prove. One approach is to first construct the isomorphism $F(\alpha) \cong F[x]/(m)$ and then prove the analogous statements for $F[x]/(m)$ in place of $F(\alpha)$. To prove that the degree of $F[x]/(m)$ over $F$ is equal to the degree of $m$, it is enough to show that $\{1,x,\ldots,x^{\deg m - 1}\}$ forms a basis for $F[x]/(m)$ as an $F$-vector space. And one way to show that the number of $F$-algebra homomorphisms $F[x]/(m) \to K$ is equal to the number of roots of $m$ in $K$ is to explicitly construct the bijection between the two sets.

In a couple places in the proofs we have just sketched, it is helpful to invoke the universal property of $F[x]/(m)$. Recall that this universal property is that if $K/F$ is any field extension and $\beta \in  K$ is a root of $m$ then there is a unique $F$-algebra homomorphism $F[x]/(m) \to K$ that sends $x$ to $\beta$.

The formalization of most of these arguments works pretty smoothly in Lean. Working with isomorphisms and universal properties in Lean (and in other proof assistants based on dependent type theory) feels very natural and works well. However, proving that $\{1,x,\ldots,x^{\deg m - 1}\}$ forms a basis for $F[x]/(m)$ turned out to be surprisingly difficult. In our experience, it can be tough to work directly with bases and linear combinations in Lean (partly because it can be tough to work with finite sets and finite lists). 
For this reason, we opted to show that the degree of $F[x]/(m)$ over $F$ is $\deg m$ by constructing an isomorphism between $F[x]/(m)$ and the $F$-vector space of polynomials over $F$ of degree less than $\deg m$.

However, it is sometimes useful to know explicit bases for $F[x]/(m)$ and $F(\alpha)$ (for example, to work with the trace of an element of an algebraic number field). For this reason, Anne Baanen put in the hard work required to actually show that $\{1, x, \ldots, x^{\deg m - 1}\}$ is actually a basis. Along the way, Baanen also defined a new structure called \code{power_basis} to capture situations like $F[x]/(m)$ and $F(\alpha)$ where an $F$-algebra has a basis of the form $\{1, a, a^2, \ldots, a^{n-1}\}$. This has the advantage of allowing general lemmas that cover all of these situations simultaneously. For example, if $A$ and $B$ are both $F$-algebras, $\{1, a, \ldots, a^{n - 1}\}$ is a power basis for $A$, and $b\in B$ is a root of the minimal polynomial of $a$, then there is a unique $F$-algebra homomorphism $A \to B$ mapping $a$ to $b$.
This is another example of the idea that finding the right abstractions is very important in formalizing mathematics, and that sometimes these abstractions don't have any direct analogue in human mathematics.

\section{Galois Theory}

We are now ready to move on to Galois theory.
Recall that we are using separable and normal as our definition of a Galois extension.
The advantage of this is that we can immediately apply the primitive element theorem to Galois extensions.

When proving the Galois correspondence, we won't work directly with the definition of a Galois extension.
Instead, we will rely on the following two lemmas.
The first lemma states that a Galois extension is Galois over any intermediate field.
\begin{lemma}
\label{lem:gal_tower}
Let $E/K/F$ be a tower of field extensions.
If $E/F$ is Galois then so is $E/K$.
\end{lemma}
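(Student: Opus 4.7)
The plan is to prove the two conditions (separability and normality) separately, and in both cases the key observation is that for any $\alpha \in E$, the minimal polynomial of $\alpha$ over $K$ divides the minimal polynomial of $\alpha$ over $F$. This divisibility follows because the minimal polynomial $m_F \in F[x]$ of $\alpha$ over $F$, viewed as an element of $K[x]$ via the inclusion $F \hookrightarrow K$, is a monic polynomial in $K[x]$ vanishing at $\alpha$, so it must be a multiple of the minimal polynomial $m_K$ of $\alpha$ over $K$.

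For separability, I would argue that since $m_F$ is separable (by hypothesis that $E/F$ is Galois, hence separable) and $m_K \mid m_F$ in $K[x]$, the polynomial $m_K$ is also separable, since any divisor of a polynomial coprime with its derivative is itself coprime with its derivative. Hence every element of $E$ has separable minimal polynomial over $K$, so $E/K$ is separable. For normality, I would use the fact that $m_F$ splits into linear factors over $E$ (by hypothesis that $E/F$ is normal); since $m_K$ divides $m_F$ in $K[x] \subseteq E[x]$, and divisors of polynomials that split over $E$ also split over $E$, it follows that $m_K$ splits over $E$. Thus every element of $E$ has minimal polynomial over $K$ that splits over $E$, so $E/K$ is normal.

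In the Lean formalization, the main obstacle will not be the mathematical content, which is essentially trivial, but rather the bookkeeping required to pass between minimal polynomials over different base fields. Specifically, to even state that $m_K$ divides $m_F$ we need to interpret $m_F \in F[x]$ as a polynomial in $K[x]$ via the algebra map $F \to K$, and we need to know that this interpretation still vanishes at $\alpha$, which will require invoking the \code{is\_scalar\_tower} structure on $E/K/F$ for compatibility. I would expect to need (or to prove) small helper lemmas saying that separability and the splitting property are preserved under divisors, and a lemma relating $m_K$ and (the image in $K[x]$ of) $m_F$.

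Once those pieces are in place, the actual proof reduces to picking an arbitrary $\alpha \in E$, extracting $m_F$ and $m_K$, establishing the divisibility, and invoking the two preservation lemmas to conclude. The whole argument should be short, and the main intellectual work is really making sure the type-theoretic setup (which \code{algebra} instances are in scope, which scalar towers are available) lines up cleanly enough that the divisibility can be stated and used without excessive manipulation.
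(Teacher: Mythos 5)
Your proof is correct and takes essentially the same approach as the paper's formalization: the paper states this lemma without a written proof precisely because it reduces to the standard observation you make, that for each $\alpha \in E$ the minimal polynomial over $K$ divides the (image of the) minimal polynomial over $F$, so separability and the splitting property are inherited by divisors. Your anticipated bookkeeping — stating the divisibility via the map $F[x] \to K[x]$ and using the \code{is_scalar_tower} compatibility — is also exactly how the paper describes the Lean setup for this lemma.
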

Recall that \mathlib uses \code{algebra F E} to talk about field extensions, rather than require $F$ to be of type \code{subfield E}.
In the same way, \mathlib uses \code{is_scalar_tower F K E} to talk about towers of field extensions, rather than require $K$ to be of type \code{intermediate_field F E}.
This has the same advantages discussed in section \ref{section:algebra}.

Lemma \ref{lem:gal_tower} is formalized in terms of \code{is_scalar_tower}.
This extra generality is not necessary for the proof of the Galois correspondence.
However, it will be used when proving the equivalent characterizations of Galois extensions.

The second lemma is where we use the primitive element theorem.
\begin{lemma}
\label{lem:gal_aut_card}
Let $E/F$ be a Galois extension of finite degree.
Then $\abs{\Aut(E/F)}=[E:F]$.
\end{lemma}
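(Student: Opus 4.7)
The plan is to leverage the primitive element theorem together with Lemmas \ref{lem:adjoin_simple_degree} and \ref{lem:adjoin_simple_hom}. Since $E/F$ is Galois, it is separable and of finite degree, so the primitive element theorem gives us an element $\alpha \in E$ with $E = F(\alpha)$. Let $m \in F[x]$ be the minimal polynomial of $\alpha$ over $F$. Then Lemma \ref{lem:adjoin_simple_degree} immediately identifies $[E:F]$ with $\deg m$, so it suffices to show that $|\Aut(E/F)| = \deg m$.

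To count the automorphisms, I would first count the $F$-algebra homomorphisms $E \to E$. By Lemma \ref{lem:adjoin_simple_hom}, applied with $K = E$, this number equals the number of roots of $m$ in $E$. Now I use both halves of the Galois hypothesis: normality of $E/F$ tells us that $m$ splits completely over $E$, while separability tells us that $m$ has no repeated roots, so the count of distinct roots of $m$ in $E$ is exactly $\deg m$. Thus there are precisely $\deg m$ $F$-algebra homomorphisms from $E$ to itself.

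The final step is to upgrade ``$F$-algebra homomorphism $E \to E$'' to ``$F$-algebra automorphism of $E$.'' Any field homomorphism is injective (its kernel is a proper ideal of a field), so an $F$-algebra homomorphism $E \to E$ is an injective $F$-linear map of the finite-dimensional $F$-vector space $E$ to itself, hence bijective. Therefore the set of $F$-algebra homomorphisms $E \to E$ coincides with $\Aut(E/F)$, and we conclude $|\Aut(E/F)| = \deg m = [E:F]$.

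The main obstacle I anticipate is not the mathematical content, which is routine, but the bookkeeping required to make this argument go through in Lean: we need to pass between $F(\alpha)$ (of type \code{intermediate\_field F E}) and $E$ itself, apply Lemma \ref{lem:adjoin_simple_hom} with $K = E$ while respecting the \code{is\_scalar\_tower} compatibility, and then invoke the finite-dimensional fact that an injective $F$-algebra endomorphism of $E$ is surjective (which should already be available in \mathlib for linear endomorphisms of finite-dimensional vector spaces, but may need to be transported across the forgetful map from algebra homomorphisms to linear maps). Managing these type-theoretic translations, rather than the mathematics, is where the work is likely to lie.
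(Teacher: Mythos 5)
Your proposal is correct and follows essentially the same route as the paper: primitive element theorem, then Lemma \ref{lem:adjoin_simple_degree} to identify $[E:F]$ with $\deg m$, then Lemma \ref{lem:adjoin_simple_hom} plus separability and normality to count $\deg m$ roots of $m$ in $E$. The only difference is that you explicitly justify identifying $F$-algebra homomorphisms $E \to E$ with automorphisms (via injectivity and finite dimensionality), a step the paper's proof leaves implicit.
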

\begin{proof}
By the primitive element theorem, $E=F(\alpha)$ for some $\alpha\in E$.
Since $E/F$ is of finite degree, $\alpha$ is algebraic over $F$ with minimal polynomial $m\in F[x]$.
By Lemma \ref{lem:adjoin_simple_degree}, $[E:F]=\deg m$.
By Lemma \ref{lem:adjoin_simple_hom}, $\abs{\Aut(E/F)}$ equals the number of roots of $m$ in $E$.
Since $E/F$ is separable and normal, $m$ has $\deg m$ in roots $E$.
Thus, both sides of the desired equality are equal to $\deg m$.
\end{proof}

\subsection{The Galois Correspondence}
\label{section:ftgt}

We will fix a field extension $E/F$.
The Galois correspondence consists of a pair of functions.
For us, this pair of functions will be between the types \code{intermediate_field F E} and \code{subgroup (E ≃ₐ[F] E)}, the type of all subgroups of the group $\Aut(E/F)$ of $F$-algebra automorphisms of $E$.

These maps are defined as usual.
If $H$ is a subgroup of $\Aut(E/F)$, then $E^H$ is the fixed field of $H$.
If $K$ is an intermediate field of $E/F$ then $\Aut(E/K)$ is a subgroup of $\Aut(E/F)$.

There are three numerical facts that we will need.
The first two were already in \mathlib, courtesy of Kenny Lau.
The third is a consequence of Lemmas \ref{lem:gal_tower} and \ref{lem:gal_aut_card} (which uses the primitive element theorem).
\begin{lemma}
\label{lem:inequality}
Let $E/F$ be a field extension of finite degree.
Let $H$ be a subgroup of $\Aut(E/F)$ and let $K$ be an intermediate field of $E/F$.
\begin{enumerate}
    \item $[E:E^H]\leq\abs{H}$ \emph{(}linear independence of characters\emph{)}.
    \item $\abs{\Aut(E/K)}\leq[E:K]$.
    \item If $E/F$ is Galois then $\abs{\Aut(E/K)}=[E:K]$.
\end{enumerate}
\end{lemma}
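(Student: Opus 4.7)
The plan is to split the three clauses by source. Clauses (1) and (2) are pre-existing mathlib results (due, as the text notes, to Kenny Lau), so my proposal for them is only to cite them. For concreteness: clause (1) is the standard linear-independence-of-characters argument---if $[E:E^H]>|H|$, choose $\alpha_1,\ldots,\alpha_{|H|+1}\in E$ linearly independent over $E^H$, find by dimension count a nonzero relation $\sum_j c_j\sigma_i(\alpha_j)=0$ for all $\sigma_i\in H$, take a minimal-support such relation with $c_1=1$, apply some $\tau\in H$ and subtract; the shorter relation forces every $c_j$ to lie in $E^H$, which contradicts linear independence once we evaluate at $\sigma_i=\mathrm{id}$. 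Clause (2) follows by noting that any $\sigma\in\Aut(E/K)$ is determined by its values on a generating set, and on each generator it must land among the (finitely many) roots of the minimal polynomial over $K$, giving the count bounded by $[E:K]$ (e.g.\ by induction on the degree of the extension, adjoining generators one at a time using an induction scheme in the spirit of Lemma \ref{lem:induction1}).

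For clause (3), the proof is essentially a one-liner assembling the lemmas that have already been proved in the excerpt. First I would invoke Lemma \ref{lem:gal_tower} applied to the tower $E/K/F$ to conclude that $E/K$ is itself Galois. Then, since $E/F$ has finite degree, the sub-extension $E/K$ also has finite degree, so Lemma \ref{lem:gal_aut_card} applies to $E/K$ and yields $|\Aut(E/K)|=[E:K]$. Nothing else is needed; both inputs are already in hand.

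The main obstacle in the formalization is not mathematical but infrastructural: to apply Lemma \ref{lem:gal_tower} and Lemma \ref{lem:gal_aut_card} to the pair $E/K$ we need Lean to see $E/K$ as a field extension in the mathlib sense, i.e.\ we need the \code{algebra K E} instance and, when chaining with $F$, the \code{is_scalar_tower F K E} instance discussed in section \ref{section:algebra}. Since $K$ here has type \code{intermediate_field F E} rather than being a type in its own right, the coercion from \code{intermediate_field F E} to a field (with the induced $F$-algebra structure on $E$ as a $K$-algebra) must already be set up, together with the scalar-tower compatibility. Once those instances are in place, clause (3) compiles as a direct rewrite; without them, even this trivial deduction will fail to typecheck. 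So the ``hard part'' here is really the same phenomenon emphasized elsewhere in the paper: ensuring that the structural side-conditions which a human mathematician passes over silently are available to Lean's type-class inference at the moment they are needed.
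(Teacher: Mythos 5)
Your proposal matches the paper's treatment exactly: parts (1) and (2) are simply cited as pre-existing \mathlib results due to Kenny Lau, and part (3) is deduced by applying Lemma \ref{lem:gal_tower} to the tower $E/K/F$ and then Lemma \ref{lem:gal_aut_card} to the finite-degree Galois extension $E/K$. Your additional remarks on the instance/coercion plumbing (\code{algebra K E}, \code{is_scalar_tower F K E}) are consistent with the infrastructural points the paper emphasizes elsewhere, so there is nothing to correct.
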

We can now prove the Galois correspondence.
\begin{theorem}[The Galois Correspondence]
\label{thm:correspondence}
Let $E/F$ be a field extension of finite degree.
Let $H$ be a subgroup of $\Aut(E/F)$ and let $K$ be an intermediate field of $E/F$.
\begin{enumerate}
    \item $\Aut(E/E^H)=H$ and $[E:E^H]=\abs{H}$.
    \item If $E/F$ is Galois then $E^{\Aut(E/K)}=K$.
\end{enumerate}
\end{theorem}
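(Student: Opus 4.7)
The plan is to derive both parts from Lemma \ref{lem:inequality} together with the obvious inclusions $H \subseteq \Aut(E/E^H)$ and $K \subseteq E^{\Aut(E/K)}$, squeezing the relevant cardinalities and degrees into forced equalities.

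For part 1, I would first observe that every element of $H$ fixes $E^H$ by definition, so $H$ is a subgroup of $\Aut(E/E^H)$, which already gives $\abs{H} \leq \abs{\Aut(E/E^H)}$. Then I would chain together the two inequalities from Lemma \ref{lem:inequality} applied with $K = E^H$: namely $\abs{\Aut(E/E^H)} \leq [E : E^H]$ (part 2 of that lemma) and $[E : E^H] \leq \abs{H}$ (linear independence of characters). This gives a circular chain $\abs{H} \leq \abs{\Aut(E/E^H)} \leq [E : E^H] \leq \abs{H}$, forcing all three quantities to be equal. In particular $[E : E^H] = \abs{H}$, and since $H \subseteq \Aut(E/E^H)$ as finite groups of the same cardinality, they coincide.

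For part 2, I would apply the first part with $H = \Aut(E/K)$ to conclude that $[E : E^{\Aut(E/K)}] = \abs{\Aut(E/K)}$. Since $E/F$ is Galois, part 3 of Lemma \ref{lem:inequality} gives $\abs{\Aut(E/K)} = [E : K]$, so $[E : E^{\Aut(E/K)}] = [E : K]$. Combined with the trivial inclusion $K \subseteq E^{\Aut(E/K)}$, the tower law $[E : K] = [E : E^{\Aut(E/K)}] \cdot [E^{\Aut(E/K)} : K]$ forces $[E^{\Aut(E/K)} : K] = 1$, hence $E^{\Aut(E/K)} = K$.

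I do not expect any of the mathematical steps to be the main obstacle—once Lemma \ref{lem:inequality} is in hand the argument is essentially a pigeonhole squeeze. The likely formalization obstacle is the bookkeeping between types: the inclusion $H \subseteq \Aut(E/E^H)$ must be handled as a coercion or as a group homomorphism induced by the \code{is\_scalar\_tower} relationship between $F$, $E^H$, and $E$, and the equality $H = \Aut(E/E^H)$ must be promoted from an equality of underlying sets to an equality of subgroups. Similarly, applying the tower law in part 2 requires supplying Lean with the \code{is\_scalar\_tower} instances witnessing $E/E^{\Aut(E/K)}/K$, which means setting up the coercions between intermediate fields of $E/F$ and intermediate fields of $E/K$ carefully enough that the tower degree formula fires without manual intervention.
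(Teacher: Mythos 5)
Your proposal is correct and follows essentially the same route as the paper: both parts are obtained by squeezing the inequalities of Lemma \ref{lem:inequality} against the trivial inclusions $H\leq\Aut(E/E^H)$ and $K\leq E^{\Aut(E/K)}$, with part 2 reducing to part 1 applied to $H=\Aut(E/K)$ together with the Galois equality $\abs{\Aut(E/K)}=[E:K]$. The only difference is cosmetic: you spell out the tower-law step that the paper leaves implicit in ``so we must have equality,'' and your closing remarks on coercions and \code{is_scalar_tower} instances match the formalization concerns the paper discusses elsewhere.
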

In other words, the functions $H\mapsto E^H$ and $K\mapsto\Aut(E/K)$ define an Galois insertion from subgroups of $\Aut(E/F)$, ordered by reverse inclusion, into intermediate fields of $E/F$.
If $E/F$ is Galois, then this Galois insertion is actually a lattice isomorphism.
\begin{proof}
Combining the first two parts of Lemma \ref{lem:inequality} gives
\[\abs{\Aut(E/E^H)}\leq[E:E^H]\leq\abs{H}.\]
However, $H\leq\Aut(E/E^H)$, so we must have equality.
This proves part 1.
Now assume that $E/F$ is Galois.
Combining the equality $[E:E^H]=\abs{H}$ with the third part of Lemma \ref{lem:inequality} gives
\[[E:E^{\Aut(E/K)}]=\abs{\Aut(E/K)}=[E:K].\]
However, $K\leq E^{\Aut(E/K)}$, so we must have equality.
\end{proof}
Note that the second part of Lemma \ref{lem:inequality} is only needed to avoid assuming that $E/F$ is Galois in the first part of Theorem \ref{thm:correspondence}.
In particular, you can deduce the Galois correspondence from just the first and third parts of Lemma \ref{lem:inequality}.
\subsection{Equivalent Characterizations of Galois Extension}
\label{section:equivalent}
In order for the Galois correspondence to be useful, we will need to prove the equivalent characterizations of Galois extensions.
Let $E/F$ be a finite extension.
We will consider the following equivalent conditions:
\begin{enumerate}
    \item $E/F$ is Galois,
    \item $F$ is the fixed field of $\Aut(E/F)$,
    \item $\abs{\Aut(E/F)}=[E:F]$,
    \item $E$ is the splitting field of a separable polynomial.
\end{enumerate}
When formalizing the equivalence of these statements in Lean, we both formalized specific implications as well as a ``the following are equivalent'' theorem (there is a special syntax for such statements in Lean, which allows us to directly use all $12$ implications without having to prove each one directly).

Before we continue, there is one subtlety with condition (2). Here's how it's stated in Lean:
\begin{lstlisting}
fixed_field (⊤ : subgroup (E ≃ₐ[F] E)) = ⊥
\end{lstlisting}
This looks a bit different than our statement of (2) above. The reason why is that our formalized definition of fixed field was as a function \code{subgroup (E ≃ₐ[F] E) → intermediate_field F E}. So in order to make the types match up in the statement of (2), we must replace $F$ with the smallest intermediate field of $E/F$ and $\Aut(E/F)$ with the largest subgroup of $\Aut(E/F)$. In other words, (2) should really be phrased as
\begin{enumerate}
    \item[$2^\prime$.] The fixed field of the top subgroup of $\Aut(E/F)$ equals the bottom intermediate field of $E/F$.
\end{enumerate}
To a human mathematician, (2) and $(2^\prime)$ look like two different ways of saying the same thing. But to Lean, they are different. 

With this subtlety out of the way, a few directions are easy from what we already know:
\begin{itemize}
    \item $(1)\implies(4)$: Take the minimal polynomial of a primitive element.
    \item $(2^\prime)\iff(3)$: Let $H$ be the top subgroup of $\Aut(E/F)$ in the equality $[E:E^H]=\abs{H}$.
\end{itemize}
It remains to prove $(2^\prime)\implies(1)$ and $(4)\implies(3)$.

For the direction $(2^\prime)\implies(1)$, Kenny Lau had already proved that if $G$ is a finite group of automorphisms of a field $E$ then $E/E^G$ is Galois (in the separable and normal sense).
However, this only tells us that $E$ is Galois over the bottom intermediate field of $E/F$.
Somehow we have to go from knowing that $E/\bot$\footnote{Another technical issue is that $\bot$ needs to be coerced from type \code{intermediate_field F E} to type \code{Type}, but this is handled automatically by Lean.} is Galois to knowing that $E/F$ is Galois.

In human mathematics, this problem seems almost nonsensical: $F$ is just equal to $\bot$.
But in Lean they are not equal because they have different types.
One option would be to define a notion of isomorphic field extensions, prove that $E/F$ is isomorphic to $E/\bot$, and prove that being Galois is preserved along this type of isomorphism.

It turns out that an easier option is to invoke a structure that was useful several times: \code{is_scalar_tower}.
We can apply Lemma \ref{lem:gal_tower} to the tower $E/F/\bot$.
All that's needed is to construct instances \code{algebra ⊥ F} and \code{is_scalar_tower ⊥ F E}.
In other words, we need to construct a ring homomorphism $\bot\to F$ from the bottom intermediate field of $E/F$ to $F$, and show that it is compatible with the existing ring homomorphisms $F\to E$ and $\bot\to E$.

This kind of problem of transporting properties along various types of map (in particular, along isomorphisms) shows up a lot in formalization.
It is possible that transporting properties along isomorphisms could be automated.

Finally, the direction $(4)\implies(3)$ is the trickiest, and requires induction.
In the human mathematics proof, we might consider the roots of $r_1,\ldots,r_n$ of the separable polynomial $p \in F[x]$, and inductively prove that the number of $F$-algebra homomorphisms $F(r_1,\ldots,r_k)\to E$ equals the degree $[F:F(r_1,\ldots,r_k)]$.
The inductive step involves showing that each $F$-algebra homomorphism $F(r_1,\ldots,r_k)\to E$ has exactly $[F(r_1,\ldots,r_{k+1}):F(r_1,\ldots,r_k)]$ extensions to $F(r_1,\ldots,r_{k+1})$.

There are several changes that we make to this proof.
First, rather than enumerating the roots of $p$ and adjoining them one-by-one, we instead apply Lemma \ref{lem:induction1} to the set of roots of $p$ and to the predicate
\[P(K)\iff\text{the number of $F$-algebra homomorphisms $K\to E$ equals $[K:F]$}.\]
Then the inductive step involves counting $F$-algebra homomorphisms $K(\alpha)\to E$, where $K$ is an arbitrary intermediate field of $F/E$ and $\alpha$ is a root of $p$.

In Lean, the easiest way to count something is to establish a bijection.
A bijection in Lean needs to be between two types.
For our bijection, one of the types should be the type of $F$-algebra homomorphisms $K(\alpha)\to E$.
By examining the human mathematics proof, we see that the other type should be the type of pairs $(f,g)$ where $f$ is an $F$-algebra homomorphism $K\to E$ and $g$ is an $F$-algebra homomorphism $K(\alpha)\to E$ extending $f$.
The type of pairs $(f,g)$ is an example of a \textit{dependent type} since the type of $g$ depends on the choice $f$.
Not every type theory theorem prover supports dependent types, so this is one of the advantages of Lean.

We now need to count the number of $g$ extending each $f$.
Ideally, we would reuse Lemma \ref{lem:adjoin_simple_hom}.
This doesn't directly work since $g$ is not a $K$-algebra homomorphism.
However, let $E_f$ be the field $E$ with $K$-algebra structure given by $f$.
Then an $F$-algebra homomorphism $g\colon K(\alpha)\to E$ is the same as a $K$-algebra homomorphism $g\colon K(\alpha)\to E_f$.

Then the inductive step follows from Lemmas \ref{lem:adjoin_simple_degree} and \ref{lem:adjoin_simple_hom}, along with the following proposition.
\begin{proposition}
\label{prop:alg_hom_equiv_sigma}
There is a bijection between $F$-algebra homomorphisms $K(\alpha)\to E$ and pairs $(f,g)$ where $f$ is an $F$-algebra homomorphism $K\to E$ and $g$ is a $K$-algebra homomorphism $K(\alpha)\to E_f$.
Here $E_f$ denotes $E$ with $K$-algebra structure given by $f$.
\end{proposition}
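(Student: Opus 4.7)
The plan is to construct the bijection explicitly and check that it is mutually inverse by unfolding definitions. Essentially all of the mathematical content is the tautology that ``being an $F$-algebra homomorphism $K(\alpha)\to E$ that restricts to $f$ on $K$'' is the same as ``being $K$-linear, with $K$ acting on $E$ via $f$.'' The real work is in making the Lean type signatures line up.

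In the forward direction, I would send an $F$-algebra homomorphism $h\colon K(\alpha)\to E$ to the pair $(f,g)$, where $f$ is the restriction of $h$ along the inclusion $K\hookrightarrow K(\alpha)$ (itself an $F$-algebra homomorphism, courtesy of the \code{intermediate_field} API), and $g$ is the underlying ring homomorphism of $h$ repackaged as a $K$-algebra homomorphism into $E_f$. The $K$-linearity of $g$ is immediate from the definition of the $K$-action on $E_f$: for $k\in K$ and $x\in K(\alpha)$, we compute $g(k\cdot x)=h(k)\,h(x)=f(k)\,g(x)$, which is precisely $k\bullet_{E_f}g(x)$.

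In the reverse direction, given $(f,g)$ I would take the underlying ring homomorphism of $g$ and repackage it as an $F$-algebra homomorphism $K(\alpha)\to E$. The $F$-linearity is witnessed by the scalar tower $F\to K\to K(\alpha)$ together with the $K$-linearity of $g$ and the fact that $f$ is itself an $F$-algebra homomorphism, so that for $a\in F$ the two ways of pushing $a$ through to $E$ (via the $F$-algebra structure on $E$, or via $K$ and then $f$) agree. The two round trips are then identities by extensionality, since neither construction modifies the underlying function.

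The main obstacle I expect is the Lean-level bookkeeping around $E_f$. Since $E$ and $E_f$ share an underlying type, we cannot simply register a second \code{algebra K E} instance without breaking type class inference, which already sees the $K$-algebra structure inherited from the tower $K\to K(\alpha)\to E$ of the ambient setup (or, if that is absent, from the scalar tower $F\to K\to E$). The standard fix, and the reason the target is phrased as a \code{sigma} type in the first place, is to let the $K$-algebra structure on the target be an explicit argument built from $f$, so that the type of $g$ legitimately depends on $f$. One then has to construct the accompanying \code{is_scalar_tower F K} instance for this locally built structure and use it to convert $K$-algebra homomorphisms into $E_f$ back into $F$-algebra homomorphisms into $E$. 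The bijection maps themselves are short; most of the code will be these instance constructions and coercions.
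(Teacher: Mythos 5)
Your proposal is correct and matches the paper's approach: the paper likewise realizes the pairs $(f,g)$ as a dependent (sigma) type, identifies an $F$-algebra homomorphism $K(\alpha)\to E$ extending $f$ with a $K$-algebra homomorphism into $E_f$, and the bijection is exactly the restrict-and-repackage correspondence you describe. The only point you gloss slightly is that in the round trip starting from $(f,g)$, the equality of first components (the restriction of the repackaged map to $K$ equals $f$) follows from the $K$-linearity of $g$ over $E_f$ rather than bare extensionality, but this is immediate from your own forward-direction computation.
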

Proposition \ref{prop:alg_hom_equiv_sigma} can be generalized considerably.
In particular, $K(\alpha)/K/F$ can be any \code{is_scalar_tower}.

\section{Future Work: The Abel-Ruffini Theorem}

Though we have formalized a few of the main theorems of Galois theory, there are some important results that we have not formalized yet. We plan to continue developing the field theory library of \mathlib by working on formalizing these remaining results. We, along with Jordan Brown, have recently formalized a proof of the Abel-Ruffini theorem on the insolvability of polynomials of degree five by radicals. This theorem was first formalized a few months earlier in the Coq theorem prover by Sophie Bernard, Cyril Cohen, Assia Mahboubi, and Pierre-Yves Strub \cite{bernard2021unsolvability}.

\section{Acknowledgements}

We would like to thank everyone in the Lean community who helped us to complete this project by answering our questions on the Lean Zulip chat and giving us feedback on our pull requests. Special thanks are due to Anne Baanen and Johan Commelin for numerous suggestions, to Kevin Buzzard for helping solve a few technical problems and to Kenny Lau and Anne Baanen for developing a lot of the field theory in \mathlib on which our project depended. We would also like to thank Johan Commelin, Kevin Buzzard and the anonymous referee for this paper for reading this paper carefully and making many helpful suggestions which greatly improved our exposition. And finally, thanks to all the participants in the Berkeley Lean Seminar during Summer 2020, and especially to Kyle Miller and Jordan Brown.

\bibliographystyle{plain}
\bibliography{bibliography}

\end{document}